\definecolor{keywordcolor}{rgb}{0.7, 0.1, 0.1}   
\definecolor{tacticcolor}{rgb}{0.1, 0.2, 0.6}    
\definecolor{commentcolor}{rgb}{0.4, 0.4, 0.4}   
\definecolor{symbolcolor}{rgb}{0.0, 0.1, 0.6}    
\definecolor{sortcolor}{rgb}{0.1, 0.5, 0.1}      
\definecolor{attributecolor}{rgb}{0.7, 0.1, 0.1} 
\newcommand{\B}{\mathbb{B}}
\newcommand{\lil}{\lstinline}
\newcommand{\N}{\mathbb{N}}
\theoremstyle{theorem}
\newtheorem{thm}[theorem]{Theorem}
\theoremstyle{definition}
\newtheorem{defn}[definition]{Definition}
\title{A formalization of forcing and the unprovability of the continuum hypothesis\footnote{This is an extended preprint of a paper which was submitted to ITP 2019.}} 
\titlerunning{A formalization of forcing and the unprovability of the continuum hypothesis}
\author{Jesse Michael Han\footnote{Corresponding author.}}{Department of Mathematics, University of Pittsburgh \and \url{https://www.pitt.edu/~jmh288}}{jessemichaelhan@gmail.com}{}{}
\author{Floris van Doorn}{Department of Mathematics, University of Pittsburgh \and \url{http://florisvandoorn.com/}}{fpvdoorn@gmail.com}{}{}
\authorrunning{J.\,M. Han and F.\, van Doorn}
\keywords{Interactive theorem proving, formal verification, set theory, forcing, independence proofs, continuum hypothesis, Boolean-valued models, Lean}
\begin{document}

\maketitle

\begin{abstract}
We describe a formalization of forcing using Boolean-valued models in the Lean 3 theorem prover, including the fundamental theorem of forcing and a deep embedding of first-order logic with a Boolean-valued soundness theorem. As an application of our framework, we specialize our construction to the Boolean algebra of regular opens of the Cantor space $2^{\omega_2 \times \omega}$ and formally verify the failure of the continuum hypothesis in the resulting model.
\end{abstract}

\section*{Introduction}
The continuum hypothesis states that there are no sets strictly larger than the countable natural numbers and strictly smaller than the uncountable real numbers. It was introduced by Cantor \cite{cantor1878beitrag} in 1878 and was the very first problem on Hilbert's list of twenty-three outstanding problems in mathematics. G\"odel \cite{godel1938consistency} proved in 1938 that the continuum hypothesis was consistent with $\mathsf{ZFC}$, and later conjectured that the continuum hypothesis is independent of $\mathsf{ZFC}$, i.e. neither provable nor disprovable from the $\mathsf{ZFC}$ axioms. In 1963, Paul Cohen developed \emph{forcing} \cite{cohen-the-independence-of-the-continuum-hypothesis-1,cohen1964independence2}, which allowed him to prove the consistency of the negation of the continuum hypothesis, and therefore complete the independence proof. For this work, which marked the beginning of modern set theory, he was awarded a Fields medal---the only one to ever be awarded for a work in mathematical logic.

In this paper we discuss the formalization of a Boolean-valued model of set theory where the continuum hypothesis fails. 
The work we describe is part of the Flypitch project, which aims to formalize the independence of the continuum hypothesis. Our results mark a major milestone towards that goal.

Our formalization is written in the Lean 3 theorem prover. Lean is an interactive proof assistant under active development at Microsoft Research \cite{de2015lean, sebastian1}. It implements the Calculus of Inductive Constructions and has a similar metatheory to Coq, adding definitional proof irrelevance, quotient types, and a noncomputable choice principle. 
Our formalization makes as much use of the expressiveness of Lean's dependent type theory as possible, using constructions which are impossible or unwieldy to encode in HOL, much less ZF: Lean's ordinals and cardinals, which are defined as equivalence classes of well-ordered types, live one universe level up and play a crucial role in the forcing argument; the models of set theory we construct require as input an entire universe of types; our encoding of first-order logic uses parametrized inductive types to equate type-correctness with well-formedness, eliminating the need for separate well-formedness proofs.

The method of forcing with Boolean-valued models was developed by Solovay and Scott in '65-'66 \cite{scott1967proof,scott-solovay} as a simplification of Cohen's method. Some of these simplifications were incorporated by Shoenfield \cite{shoenfield1971unramified} into a general theory of forcing using partial orders, and it is in this form that forcing is usually practiced. While both approaches have essentially the same mathematical content (see e.g. \cite{kunen2014set, jech2013set, moore2019method}), there are several reasons why we chose Boolean-valued models for our formalization:

\begin{itemize}
\item \textbf{Modularity.} The theory of forcing with Boolean-valued models cleanly splits into several components (a general theory of Boolean-valued semantics for first-order logic, a library for calculations inside complete Boolean algebras, the construction of Boolean-valued models of set theory, and the specifics of the forcing argument itself) which could be formalized in parallel and then recombined.

\item \textbf{Directness.} For the purposes of an independence proof, the Boolean-valued soundness theorem eliminates the need to produce a two-valued model. This approach also bypasses any requirement for the reflection theorem/L\"owenheim-Skolem theorems, Mostowski collapse, countable transitive models, or genericity considerations for filters.

\item \textbf{Novelty and reusability.} As far as we were able to tell, the Boolean-valued approach to forcing has never been formalized. Furthermore, while for the purposes of an independence proof, forcing with Boolean-valued models and forcing with countable transitive models accomplish the same thing, a general library for Boolean-valued semantics of a deeply embedded logic could be used for formal verification applications outside of set theory, e.g. to formalize the Boolean-valued semantics of stochastic $\lambda$-calculus \cite{scott2014stochastic, bacci2018boolean}.

  \item \textbf{Amenability to structural induction.} As with Coq, Lean is able to encode extremely complex objects and reason about their specifications using inductive types. However, the user must be careful to choose the encoding so that properties they wish to reason about are accessible by structural induction, which is the most natural mode of reasoning in the proof assistant. After observing (1) that the Aczel-Werner encoding of $\mathsf{ZFC}$ as an inductive type is essentially a special case of the recursive \emph{name} construction from forcing (c.f. Section \ref{sect:bvm}), and (2) that the automatically-generated induction principle for that inductive type \emph{is} $\in$-induction, it is easy to see that this encoding can be modified to produce a Boolean-valued model of set theory where, again, $\in$-induction comes for free.
\end{itemize}

We briefly outline the rest of the paper. In Section \ref{sect:outline} we outline the method of Boolean-valued models and sketch the forcing argument. Section \ref{sect:fol} discusses a deep embedding of first-order logic, including a proof system and the Boolean-valued soundness theorem. Section \ref{sect:bvm} discusses our construction of Boolean-valued models of set theory. Section \ref{sect:forcing} describes the formalization of the forcing argument and the construction of a suitable Boolean algebra for forcing $\neg\mathsf{CH}$. Section \ref{sect:ccc} describes the formalization of some transfinite combinatorics. We conclude with a reflection on our formalization and an indication of future work.

\section{Outline of the proof}
\label{sect:outline}

$\mathsf{ZFC}$ is a collection of first-order sentences in the language of a single binary relation $\{\in\}$, used to axiomatize set theory. The continuum hypothesis can be written in this fashion as a first-order sentence $\mathsf{CH}$. A proof of $\mathsf{CH}$ is a finite list of deductions starting from $\mathsf{ZFC}$ and ending at $\mathsf{CH}$. 
The soundness theorem says that provability implies satisfiability, i.e. if $\mathsf{ZFC} \vdash \mathsf{CH}$, then $\mathsf{CH}$ interpreted in any model of $\mathsf{ZFC}$ is true. Taking the contrapositive, we can demonstrate the unprovability (equivalently, the consistency of the negation) of $\mathsf{CH}$ by exhibiting a single model where $\mathsf{CH}$ is not true.

A model of a first-order theory $T$ in a language $L$ is in particular a way of assigning $\mathsf{true}$ or $\mathsf{false}$ in a coherent way to sentences in $L$. Modulo provable equivalence, the sentences form a Boolean algebra and ``coherent'' means the assignment is a Boolean algebra homomorphism (so $\lor$ becomes join, $\forall$ becomes infimum, etc.) into $\mathbf{2} = \{\mathsf{true}, \mathsf{false}\}$. The soundness theorem ensures that this homomorphism $v$ sends a proof $\phi \vdash \psi$ to an inequality $v(\phi) \leq v(\psi)$. $\mathbf{2}$ may be replaced by any complete Boolean algebra $\B$, where the top and bottom elements $\top, \bot$ take the place of $\mathsf{true}$ and $\mathsf{false}$. It is straightforward to extend this analogy to a $\B$-valued semantics for first-order logic, and in this generality, the soundness theorem now says that for any such $\B$, if $\mathsf{ZFC} \vdash \mathsf{CH}$, then for any $\B$-valued structure where all the axioms of $\mathsf{ZFC}$ have truth-value $\top$, $\mathsf{CH}$ does also. Then as before, to demonstrate the consistency of the negation of $\mathsf{CH}$ it suffices to find just one $\mathbb{B}$ and a single $\mathbb{B}$-valued model where $\mathsf{CH}$ is not ``true''.

This is where forcing comes in. Given a universe $V$ of set theory containing a Boolean algebra $\B$, one constructs in analogy to the cumulative hierarchy a new $\B$-valued universe $V^\B$ of set theory, where the powerset operation is replaced by taking functions into $\B$. Thus, the structure of $\B$ informs the decisions made by $V^\B$ about what subsets, hence functions, exist among the members of $V^\B$; the real challenge lies in selecting a suitable $\B$ and reasoning about how its structure affects the structure of $V^\B$. While $V^\B$ may vary wildly depending on the choice of $\B$, the original universe $V$ always embeds into $V^\B$ via an operation $x \mapsto \check{x}$, and while the passage of $x$ to $\check{x}$ may not always preserve its original properties, properties which are definable with only bounded quantification are preserved; in particular, $V^\B$ thinks $\check{\mathbb{N}}$ is $\mathbb{N}$.

To force the negation of the continuum hypothesis, we use the Boolean algebra $\B := \operatorname{RO}(2^{\aleph_2 \times \mathbb{N}})$ of regular opens of the Cantor space $2^{\aleph_2 \times \mathbb{N}}$. For each $\nu \in \aleph_2$, we associate the $\B$-valued characteristic function $\chi_\nu : \mathbb{N} \to \B$ by $n \mapsto \{f \operatorname{|} f(\nu, n) = 1\}$. This induces what $V^\B$ thinks is a new subset $\widetilde{\chi_{\nu}} \subseteq \mathbb{N}$, called a \emph{Cohen real}, and furthermore, simultaneously performing this construction on all $\nu \in \aleph_2$ induces what $V^\B$ thinks is a function from $\check{\aleph_2} \to \mathcal{P}(\mathbb{N})$. After showing that $V^\B$ thinks this function is injective, to finish the proof it suffices to show that $x \mapsto \check{x}$ preserves cardinal inequalities, as then we will have squeezed $\check{\aleph_1}$ properly between $\mathbb{N}$ and $\mathcal{P}(\mathbb{N})$. This is really the technical heart of the matter, and relies on a combinatorial property of $\B$ called the \emph{countable chain condition} (CCC), the proof of which requires a detailed combinatorial analysis of the basis of the product topology for $2^{\aleph_2 \times \mathbb{N}}$; we handle this with a general result in transfinite combinatorics called the \emph{$\Delta$-system lemma}.

So far we have mentioned nothing about how this argument, which is wholly set-theoretic, is to be interpreted inside type theory. To do this, it was important to separate the mathematical content from the metamathematical content of the argument. While our objective is only to produce a model of $\mathsf{ZFC}$ satisfying certain properties, traditional presentations of forcing are careful to stay within the foundations of $\mathsf{ZFC}$, emphasizing that all arguments may be performed internal to a model of $\mathsf{ZFC}$, etc., and it is not immediately clear what parts of the argument use that set-theoretic foundation in an essential way 
and require modification in the passage to type theory. Our formalization clarifies some of these questions.

Finally, when working with Boolean-valued models, it is profitable to keep in mind the following analogy, developed by Scott in \cite{scott1967proof}. A ready supply of complete Boolean algebras $\B$ is obtained by taking the measure algebra of a probability space and quotienting by the ideal of events of measure zero. Let $\mathbf{M}$ be a $\B$-valued structure. A unary $\B$-valued predicate $\phi$ on $\mathbf{M}$ assigns an event to every element $m$ of $\mathbf{M}$, whose measure we can think of as being the probability that $\phi(m)$ is true. Specializing to the language of set theory, we can attach to every $m : \mathbf{M}$ an ``indicator function'' $\lambda x, x \in m$ which assigns to every $x$ a probability that it is actually a member of $m$. Thus, by virtue of extensionality, we may think of the elements of a $\B$-valued model of$\mathsf{ZFC}$as being ``set-valued random variables'', or ``random sets''\footnote{In this analogy, given a universe of random sets, the purpose of the generic filter or ultrafilter in forcing is then to simultaneously evaluate the outcomes of the random variables, collapsing them into an ordinary universe of sets.}; see \cite{scott1967proof} and \cite{moore2019method} for details.

\paragraph*{Sources} Our strategy for constructing a Boolean-valued model in which $\mathsf{CH}$ fails is a synthesis of the proofs in the textbooks of Bell (\cite{bell2011set}, Chapter 2) and Manin (\cite{manin2009course}, Chapter 8). For the $\Delta$-system lemma, we follow Kunen (\cite{kunen2014set}, Chapters 1 and 5).

\paragraph*{Viewing the formalization}
The code blocks in this paper were taken directly from our formalization, but for the sake of formatting and readability, we sometimes omit or modify universe levels, type ascriptions, and casts. We refer the interested reader to our repository,\footnote{\url{https://github.com/flypitch/flypitch}} which contains a guide on compiling and navigating the source files of the project. In particular, there is a summary file \lil{summary.lean} containing \lil{#print} statements of important definitions and duplicated proofs of the main theorems.

\section{First-order logic}
\label{sect:fol}
The starting point for first-order logic is a \emph{language} of relation and function symbols. We represent a language as a pair of $\N$-indexed families of types, each of which is to be thought of as the collection of relation (resp. function) symbols stratified by arity:
\begin{lstlisting}
structure Language : Type (u+1) :=
(functions : ℕ → Type u) (relations : ℕ → Type u)
\end{lstlisting}
\subsection{(Pre)terms, (pre)formulas}
The main novelty of our implemenation of first-order logic is the use of \emph{partially applied} terms and formulas, encoded in a parametrized inductive type where the $\N$ parameter measures the difference between the arity and the number of applications. The benefit of this is that it is impossible to produce an ill-formed term or formula, because type-correctness is equivalent to well-formedness. This eliminates the need for separate well-formedness proofs.

Fix a language $L$. We define the type of \textbf{preterms} as follows:
\begin{lstlisting}
inductive preterm : ℕ → Type u
| var {} : ∀ (k : ℕ), preterm 0
| func : ∀ {l : ℕ} (f : L.functions l), preterm l
| app : ∀ {l : ℕ} (t : preterm (l + 1)) (s : preterm 0), preterm l
\end{lstlisting}
We use de Bruijn indices to avoid variable shadowing. A member of \lil{preterm n} is a partially applied term. If applied to \lil{n} terms, it becomes a term. Every element of \lil{preterm L 0} is a well-formed term. We use this encoding to avoid mutual or nested inductive types, since those are not too convenient to work with in Lean.

The type of \textbf{preformulas} is defined similarly:
\begin{lstlisting}
inductive preformula : ℕ → Type u
| falsum {} : preformula 0 --  notation `⊥`
| equal (t₁ t₂ : term L) : preformula 0 -- notation `≃`
| rel {l : ℕ} (R : L.relations l) : preformula l
| apprel {l : ℕ} (f : preformula (l + 1)) (t : term L) : preformula l
| imp (f₁ f₂ : preformula 0) : preformula 0 -- notation ⟹
| all (f : preformula 0) : preformula 0 -- notation `∀'`
-- ¬ f := f ⟹ ⊥, notation `∼f`
-- ∃ f := ∼ ∀' ∼f, notation `∃' f`
\end{lstlisting}

A member of \lil{preformula n} is a partially applied formula. If applied to \lil{n} terms, it becomes a formula. Implication is the only binary connective. Since we use classical logic, we can define the other connectives from implication and falsum. Similarly, universal quantification is our only quantifier.

Our proof system is a natural deduction calculus, and 
all rules are motivated to work well with backwards-reasoning:

\begin{lstlisting}
inductive prf : set (formula L) → formula L → Type u
| axm     {Γ A} (h : A ∈ Γ) : prf Γ A
| impI    {Γ} {A B} (h : prf (insert A Γ) B) : prf Γ (A ⟹ B)
| impE    {Γ} (A) {B} (h₁ : prf Γ (A ⟹ B)) (h₂ : prf Γ A) : prf Γ B
| falsumE {Γ} {A} (h : prf (insert ∼A Γ) ⊥) : prf Γ A
| allI    {Γ A} (h : prf Γ A) : prf Γ (∀' A)
| allE₂   {Γ} A t (h : prf Γ (∀' A)) : prf Γ (A[t // 0])
| ref     (Γ t) : prf Γ (t ≃ t)
| subst₂  {Γ} (s t f) (h₁ : prf Γ (s ≃ t)) (h₂ : prf Γ (f[s // 0])) :
          prf Γ (f[t // 0])
\end{lstlisting}

A member of \lil{prf Γ A} is a proof tree encoding a derivation of $A$ from $\Gamma$. Note that \lil{prf} is \lil{Type}- instead of \lil{Prop}-valued, so different members of \lil{prf Γ A} are not definitionally equal.

\subsection{Completeness}
As part of our formalization of first-order logic, we completed a verification of the G\"odel completeness theorem. Although our present development of forcing did not require it, we anticipate that it will useful later to e.g. prove the downward L\"owenheim-Skolem theorem for extracting countable transitive models. Like soundness, it also serves as a proof-of-concept and stress-test of our chosen encoding of first-order logic.

For our formalization, we chose the Henkin-style approach of constructing a canonical term model. In order to perform the argument, which normally involves modifying the language ``in place'' to iteratively add new constant symbols, we had to adapt it to type theory. Since our languages are represented by pairs of indexed types instead of sets, we cannot really modify them in-place with new constant symbols. Instead, at each step of the construction, we must construct an entirely new language in which the previous one embeds, and in the limit we must compute a directed colimit of types instead of a union. This construction induces similar constructions on terms and formulas, and completing the argument requires reasoning with all of them. As a result of our design decisions, only a few arguments required anything more than straightforward case-analysis and structural induction. The final statement makes no restrictions on the cardinality of the language:
\begin{lstlisting}
  theorem completeness {L : Language} (T : Theory L) (ψ : sentence L) : T ⊢' ψ ↔ T ⊨ ψ
\end{lstlisting}

\subsection{Boolean-valued semantics for first-order logic}

A \textbf{complete Boolean algebra} is a type $\B$ equipped with the structure of a Boolean algebra and additionally operations $\operatorname{Inf}$ and $\operatorname{Sup}$ (which we write as $\bigsqcap$ and $\bigsqcup$) returning the infimum and supremum of an arbitrary collection of members of $\B$. We use $\sqcap, \sqcup, \implies, \top$, and $\bot$ to denote meet, join, material implication, and top/bottom elements. For more details on complete Boolean algebras, we refer the reader to the textbook of Halmos-Givant \cite{givant2008introduction}.

\begin{defn}\label{def-boolean-valued-structure}
  Fix a language $L$ and a complete Boolean algebra $\B$. 
  A \textbf{$\B$-valued structure} is an instance of the following \lil{structure}:
  \begin{lstlisting}
structure bStructure :=
(carrier : Type u)
(fun_map : ∀{n}, L.functions n → vector carrier n → carrier)
(rel_map : ∀{n}, L.relations n → vector carrier n → 𝔹)
(eq : carrier → carrier → 𝔹)
(eq_refl : ∀ x, eq x x = ⊤)
(eq_symm : ∀ x y, eq x y = eq y x)
(eq_trans : ∀{x} y {z}, eq x y ⊓ eq y z ≤ eq x z)
(fun_congr : ∀{n} (f : L.functions n) (x y : vector carrier n),
  ⨅(map2 eq x y) ≤ eq (fun_map f x) (fun_map f y))
(rel_congr : ∀{n} (R : L.relations n) (x y : vector carrier n),
  ⨅(map2 eq x y) ⊓ rel_map R x ≤ rel_map R y)
\end{lstlisting}
Above, ``\lstinline{⨅(map2 eq x y)}'' means ``the infimum of the list whose $i$th entry is \lil{eq} applied to \lil{x[i]} and \lil{y[i]}''.
\end{defn}
Note that Boolean-valued equality is not really an equivalence relation, but ``$\B$ thinks it is''. One complication which then arises in Boolean-valued semantics is keeping track of the congruence lemmas for formulas. However, as part of the soundness theorem shows, once these extensionality proofs are provided for the basic symbols in the language, they extend by structural induction to all formulas.

\subsection{The soundness theorem}

A soundness theorem says that a proof tree may be replayed to produce an actual proof in the object of truth-values. When the object of truth-values is \lil{Prop}, this says that a proof tree compiles to a proof term. When the object of truth-values is a Boolean algebra, this says that the proof tree becomes an internal implication from the interpretation of the context to the interpretation of the conclusion:

\begin{lstlisting}[gobble=2]
  lemma boolean_soundness {Γ : set (formula L)} {A : formula L}
    (H : Γ ⊢ A) : ∀ M, (⨅γ ∈ Γ, M[γ]) ≤ M[A]
\end{lstlisting}

Of course, we also formalized the ordinary soundness theorem. As a result of our design decisions, the proofs of both the ordinary and Boolean-valued soundness theorems were straightforward structural inductions.


\section{Constructing Boolean-valued models of set theory}
\label{sect:bvm}
Throughout this section, we fix a universe level $u$ and a complete Boolean algebra \lstinline{𝔹 : Type u}.

In set theory (see e.g. Jech \cite{jech2013set} or Bell \cite{bell2011set}), Boolean-valued models are obtained by imitating the construction of the von Neumann cumulative hierarchy via a transfinite recursion where iterations of the powerset operation (taking functions into $\mathbf{2} = \{\operatorname{true}, \operatorname{false}\}$) are replaced by iterations of the ``\lstinline{𝔹}-valued powerset operation'' (taking functions into $\B$).

Since this construction by transfinite recursion does not easily translate into type theory, our construction of Boolean-valued models of set theory is instead a variation on a well-known encoding originally due to Aczel \cite{aczel1978type, aczel1986type, aczel1982type}. This encoding was adapted by Werner \cite{werner1997sets} to encode $\mathsf{ZFC}$ into Coq, whose metatheory is close to that of Lean. Werner's construction was implemented in Lean's \texttt{mathlib} by Carneiro, as part of \cite{mario1}. In this approach, one takes a universe of types \texttt{Type u} as the starting point and then imitates the cumulative hierarchy by constructing the inductive type
\begin{lstlisting}
inductive pSet : Type (u+1)
| mk (α : Type u) (A : α → pSet) : pSet
\end{lstlisting}
The Aczel-Werner encoding is closely related to the recursive definition of \emph{names}, which is used in forcing to construct forcing extensions:

\begin{defn}\label{def-p-name}
Let $P$ be a partial order (which one thinks of as a collection of forcing conditions). A \emph{$P$-name} is a collection of pairs $(y, p)$ where $y$ is a $P$-name and $p : P$.
\end{defn}

 If $P$ consists of only one element, then a $P$-name is specified by essentially the same information as a member of the inductive type \lstinline{pSet} above. Conversely, specializing $P$ to an arbitrary complete Boolean algebra $\B$, we generalize the definition of \lstinline{pSet.mk} so that elements are recursively assigned Boolean truth-values:
\begin{lstlisting}
inductive bSet (𝔹 : Type u) [complete_boolean_algebra 𝔹] : Type (u+1)
| mk (α : Type u) (A : α → bSet) (B : α → 𝔹) : bSet
\end{lstlisting}
Thus \lil{bSet 𝔹} is the type of $\B$-names, and will be the underlying type of our Boolean-valued model of set theory. For convenience, if \lstinline{x : bSet 𝔹} and \lstinline{x := ⟨α, A, B⟩}, we put \lstinline{x.type := α, x.func := A, x.bval := B}.

\subsection{Boolean-valued equality and membership}

In \lil{pSet}, equivalence of sets is defined by structural recursion as follows: two sets $x$ and $y$ are equivalent if and only if for every $w \in x$, there exists a $w' \in y$ such that $w$ is equivalent to $w'$, and vice-versa. Analogously, by translating quantifiers and connectives into operations on $\B$, Boolean-valued equality is defined in the same way:
\begin{lstlisting}
def bv_eq : ∀ (x y : bSet 𝔹), 𝔹
| ⟨α, A, B⟩ ⟨α', A', B'⟩ :=
             (⨅a : α, B a ⟹ ⨆a', B' a' ⊓ bv_eq (A a) (A' a')) ⊓
               (⨅a' : α', B' a' ⟹ ⨆a, B a ⊓ bv_eq (A a) (A' a'))
\end{lstlisting}

We abbreviate \lil{bv_eq} with the infix operator \lil{=ᴮ}. With equality in place, it is easy to define membership by translating ``$x$ is a member of $y$ if and only if there exists a $w$ indexed by the type of $y$ such that $x = w$.'' As with equality, we denote $\B$-valued membership by \lil{∈ᴮ}.

\begin{lstlisting}
def mem : bSet 𝔹 → bSet 𝔹 → 𝔹
| a ⟨α' A' B'⟩ := ⨆a', B' a' ⊓ a =ᴮ A' a'
\end{lstlisting}

\subsection{Automation and metaprogramming for reasoning in $\B$} \label{subsect:proof-language}
As Scott stresses in \cite{scott2008algebraic}, ``A main point ... is that the well-known algebraic characterizations of [complete Heyting algebras] and [complete Boolean algebras] exactly mimic the rules of deduction in the respective logics.'' Indeed, that is really why the Boolean-valued soundness theorem is true. One thinks of the \lil{≤} symbol in an inequality of Boolean truth-values as a turnstile in a proof state: the conjunctands on the left as a list of assumptions in context, and the quantity on the right as the goal. For example, given \lil{a b : 𝔹}, the identity $(a \Rightarrow b) \sqcap a \leq b$ could be proven by unfolding the definition of material implication, but it is really just modus ponens; similarly, given an indexed family \lil{a : I → 𝔹}, \lstinline{⨆i, a i ≤ b ↔ ∀ i, a i ≤ b} is just $\exists$-elimination.

Difficulties arise when the statements to be proved become only slightly more complicated. Consider the following example, which should be  ``\lil{by assumption}'':
\begin{lstlisting}[gobble=2]
  ∀ a b c d e f g: 𝔹, (d ⊓ e) ⊓ (f ⊓ g ⊓ ((b ⊓ a) ⊓ c)) ≤ a
\end{lstlisting}
or slightly less trivially, the following example where the goal is attainable by ``just applying a hypothesis to an assumption''
\begin{lstlisting}[gobble=2]
  ∀ a b c d : 𝔹, (a ⟹ b) ⊓ c ⊓ (d ⊓ a) ≤ b
\end{lstlisting}

There are three ways to deal with goals like these, which approximately describe the evolution of our approach. First, one can try using the basic lemmas in \lil{mathlib}, using the simplifier to normalize expressions, and performing clever rewrites with the deduction theorem.\footnote{The deduction theorem in a Boolean algebra says that for all $a, b$ and $c$, $a \sqcap b \leq c \iff a \leq b \Rightarrow c$.} Second, one can take the LCF-style approach and expand the library of lemmas with increasingly sophisticated derived inference rules. Third, one can make the following observation:

\begin{lemma}[Yoneda lemma for posets]\label{poset-yoneda}
  Let $(P, \leq)$ be a partially ordered set. Let $a \hspace{1mm} b : P$. Then $a \leq b$ if and only if $\forall \Gamma : P, \Gamma \leq a \to \Gamma \leq b$.
\end{lemma}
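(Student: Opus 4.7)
The plan is to prove both directions directly from the defining axioms (reflexivity and transitivity) of the partial order $(P,\leq)$, without needing any additional structure on $P$.

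For the forward direction, I would assume $a \leq b$ and let $\Gamma : P$ be arbitrary with $\Gamma \leq a$. Transitivity of $\leq$ immediately yields $\Gamma \leq b$, which is exactly what we need. This step uses only the transitivity axiom and is essentially a one-liner.

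For the backward direction, which is the content-bearing part (the ``Yoneda trick''), I would specialize the universally quantified hypothesis to $\Gamma := a$. Reflexivity of $\leq$ gives $a \leq a$, and then applying the hypothesis produces $a \leq b$ as desired. Again, this uses only a single axiom (reflexivity) plus one instance of the assumption.

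There is no real obstacle here; the lemma is a trivial consequence of the poset axioms. Its significance, as the surrounding discussion makes clear, is methodological rather than mathematical: it licenses replacing a goal of the form $a \leq b$ in a complete Boolean algebra by a goal in which an arbitrary ``context'' $\Gamma \leq a$ is available as a hypothesis, thereby mimicking the turnstile of a natural-deduction proof state. Consequently the formalization effort will presumably be spent not on the proof itself but on packaging the lemma as a tactic (\emph{tidy up the context}, \emph{introduce $\Gamma$}, etc.) that integrates smoothly with the simp set and the derived inference rules for $\sqcap$, $\sqcup$, $\Rightarrow$, $\bigsqcap$, and $\bigsqcup$ discussed in Section~\ref{subsect:proof-language}.
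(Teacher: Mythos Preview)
Your proof is correct and matches the paper's treatment: the paper gives no detailed argument, simply remarking that ``its proof is utterly trivial,'' which is exactly what your reflexivity/transitivity argument confirms. Your additional commentary on the methodological role of the lemma also accurately reflects the surrounding discussion in Section~\ref{subsect:proof-language}.
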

This is a consequence of the Yoneda lemma for partially ordered sets, and its proof is utterly trivial. However, one side of the equivalence is much easier for Lean to reason with. Take the example which should have been ``\lil{by assumption}''. The following proof, in which the user navigates down the binary tree of nested \lil{⊓}s, will work:
\begin{lstlisting}
example {a b c d e f g : 𝔹} : (d ⊓ e) ⊓ (f ⊓ g ⊓((b ⊓ a)⊓ c)) ≤ a :=
by {apply inf_le_right_of_le, apply inf_le_right_of_le,
    apply inf_le_left_of_le, apply inf_le_right_of_le, refl}
\end{lstlisting}

But if we use the right-hand side of \autoref{poset-yoneda} instead, then after some preprocessing, \lstinline{assumption} will literally work:

\begin{lstlisting}
example {a b c d e f g : 𝔹} : (d ⊓ e) ⊓ (f ⊓ g ⊓((b ⊓ a)⊓ c)) ≤ a :=
by {tidy_context, assumption}
-- `tidy_context` applies `poset_yoneda`, introduces a hypothesis `H`,
-- uses `simp` at H to convert ⊓s to ∧s, and automatically splits
/- Goal state before `assumption`:
[...]
H_right_right_left_left : Γ ≤ b,
H_right_right_left_right : Γ ≤ a
⊢ Γ ≤ a -/
\end{lstlisting}

A key feature of Lean is that it is its own metalanguage, allowing for seamless in-line definitions of custom tactics. This feature was an invaluable asset, as it allowed the rapid development of a custom tactic library for simulating natural-deduction style proofs inside $\B$ after applying \autoref{poset-yoneda}. Boolean-valued versions of natural deduction rules like $\lor$/$\land$-elimination, instantiation of existentials, implication introduction, and even basic automation were easy to write. The result is that the user is able to pretend, with absolute rigor, that they are simply writing proofs in first-order logic while calculations in the complete Boolean algebra are being performed under the hood.

One use-case where automation is crucial is context-specialization. For example, suppose that after preprocessing with \lstinline{poset_yoneda}, the goal is \lstinline{Γ ≤ a ⟹ b}, and one would like to ``introduce the implication'', adding \lstinline{Γ ≤ a} to context and reducing the goal to \lstinline{Γ ≤ b}. This is impossible as stated. Rather, the deduction theorem lets us rewrite the goal to \lstinline{Γ ⊓ a ≤ b}, and now we may add \lstinline{Γ ⊓ a ≤ a}. So we may introduce the implication after all, but at the cost of specializing the context \lstinline{Γ} to the smaller context \lstinline{Γ' := Γ ⊓ a}. But now, in order for the user to continue the pretense that they are merely doing first-order logic, this change of variables must be propagated to the rest of the assumptions which may still be of the form \lstinline{Γ ≤ _}---which is extremely tedious to do by hand, but easy to automate.

\subsection{The fundamental theorem of forcing}

The fundamental theorem of forcing for Boolean-valued models \cite{hamkins2012well} states that for any complete Boolean algebra $B$, $V^B$ is a Boolean-valued model of $\mathsf{ZFC}$. Since, in type theory, a type universe \lstinline{Type u} takes the place of the standard universe $V$, the analogous statement in our setting is that for every complete Boolean algebra $\B$, \lstinline{bSet 𝔹} is a Boolean-valued model of $\mathsf{ZFC}$.

Bell \cite{bell2011set} gives an extremely detailed account of the verification of the $\mathsf{ZFC}$ axioms, and we faithfully followed his presentation for this part of the formalization. Most of it is routine. We describe some aspects of \lil{bSet 𝔹} which are revealed by this verification.

\paragraph*{Check-names}
\begin{defn}\label{def-check}
  From the definitions of \lil{pSet} and \lil{bSet}, one immediately sees that there is a canonical map \lil{check : pSet → bSet 𝔹}, defined by
 \begin{lstlisting}
def check : pSet → bSet 𝔹
| ⟨α,A⟩ := ⟨α, λ a, check (A a), λ a, ⊤⟩
\end{lstlisting} We call members of the image of \lil{check} \emph{check-names},\footnote{This terminology is standard, c.f. \cite{hamkins2012well, moore2019method}.} after the usual diacritic notation \lstinline{x̌} for \lil{check (x : pSet)}. These are also known as \emph{canonical names}, as they are the canonical representation of standard two-valued sets inside a Boolean-valued model of set theory.\footnote{We were pleased to discover Lean's support for custom notation allowed us to declare the Unicode modifier character \texttt{U+030C} ($\check{\hspace{1mm}}$) as a postfix operator for \texttt{check}.}
\end{defn}

\paragraph*{The axiom of infinity}
$\omega$ \lil{: bSet 𝔹} is $\check{\omega}$. $\omega$ is defined in \lil{pSet} to be the collection of all finite von Neumann ordinals, which are defined by induction on $\mathbb{N}$. While it is easy to show $\check{\omega}$ satisfies the axiom of infinity
\begin{lstlisting}
def axiom_of_infinity_spec (u : bSet 𝔹) : 𝔹 :=
  (∅∈ᴮ u) ⊓ (⨅i_x, ⨆i_y, (u.func i_x ∈ᴮ u.func i_y))
\end{lstlisting}
it can furthermore be shown to satisfy the universal property of $\omega$, which says that $\omega$ is a subset of any set which contains $\emptyset$ and is closed under the successor operation $x \mapsto x \cup \{x\}$.

\paragraph*{The axiom of powerset}
\begin{defn} \label{def-powerset}
  Fix a $\B$-valued set \lil{x = ⟨α, A, b⟩}. Let \lil{χ : α → 𝔹} be a function. The subset of \lil{x} associated to \lil{χ} is a \lil{𝔹}-valued set $\widetilde{\chi}$ defined as follows:
\begin{lstlisting}
def set_of_indicator {x} (χ : x.type → 𝔹) := ⟨x.type, x.func, χ⟩
\end{lstlisting}

The \textbf{powerset} $\mathcal{P}(x)$ of $x$ is defined to be the following \lil{𝔹}-valued set, whose underlying type is the type of all functions \lil{x.type → 𝔹}:
\begin{lstlisting}
def bv_powerset (u : bSet 𝔹) : bSet 𝔹 :=
⟨u.type → 𝔹, λ f, set_of_indicator f, λ f, set_of_indicator f ⊆ᴮ u⟩
\end{lstlisting}
\end{defn}

\paragraph*{The axiom of choice}
Following Bell, we verified Zorn's lemma, which is provably equivalent over $\mathsf{ZF}$ to the axiom of choice. As is the case with \lil{pSet}, establishing the axiom of choice requires the use of a choice principle from the metatheory. This was the most involved part of our verification of the fundamental theorem of forcing, and relies on the technical tool of \emph{mixtures}, which allow sequences of $\B$-valued sets to be ``averaged'' into new ones, and the \emph{maximum principle}, which allows existentially quantified statements to be instantiated without changing their truth-value.

\paragraph*{The smallness of $\B$}
We end this section by remarking that the ``smallness'' (or more precisely, the fact that $\B$ lives in the same universe of types out of which \lil{bSet 𝔹} is being built) is essential in making \lstinline{bSet 𝔹} a model of $\mathsf{ZFC}$. It is required for extracting the witness needed for the maximum principle, and is also required to even define the powerset operation, because the underlying type of the powerset is the function type of all maps into \lstinline{𝔹}.

\section{Forcing}
\label{sect:forcing}

\subsection{Representing Lean's ordinals inside \lil{pSet} and \lil{bSet}}
The treatment of ordinals in \lil{mathlib} associates a class of ordinals to every type universe, defined as isomorphism classes of well-ordered types, and includes interfaces for both well-founded and transfinite recursion. Lean's ordinals may be represented inside \lil{pSet} by defining a map \lil{ordinal.mk : ordinal → pSet} via transfinite recursion; it is nothing more than the von Neumann definition of ordinals. In pseudocode,
\begin{lstlisting}
def ordinal.mk : ordinal → pSet
| 0 := ∅
| succ ξ := pSet.succ (ordinal.mk ξ) -- (mk ξ ∪ {mk ξ})
| is_limit ξ := ⋃ η < ξ, (ordinal.mk η)
\end{lstlisting}
Composing by \lil{check} (\autoref{def-check}) yields a map \lil{check ∘ ordinal.mk : ordinal → bSet 𝔹}. (We could just as well have defined \lstinline{ordinal.mk' : ordinal → bSet 𝔹} analogously to \lstinline{ordinal.mk} without reference to \lil{check}, such that \lstinline{ordinal.mk' = check ∘ ordinal.mk}; the point is that there is a link between the metatheory's notion of size and order with that of the forcing extension.)

Cardinals in Lean are defined separately from ordinals as bijective equivalence classes of types, but are canonically represented by ordinals which are not bijective with any predecessor. We let \lil{aleph : ordinal → ordinal} index these representatives. For the rest of this section, unadorned alephs (e.g. ``$\aleph_2$'') will mean either an ordinal of the form \lil{aleph ξ} or a choice of representative from the isomorphism class of well-ordered types, and checked alephs (e.g. ``$\check{\aleph_2}$'') will mean the \lil{check ∘ ordinal.mk} of that ordinal.

\subsection{The Cohen poset and the regular open algebra}
Forcing with partial orders and forcing with complete Boolean algebras are related by the fact that every poset of forcing conditions can be embedded into a complete Boolean algebra as a dense suborder. This will be the case for our forcing argument: our Boolean algebra is the algebra of regular opens on $2^{\aleph_2 \times \mathbb{N}}$ (we identify this space with the subsets of $\aleph_2 \times \mathbb{N}$), and the poset of forcing condition embeds in this Boolean algebra as a dense suborder.

\begin{defn}
  The \textbf{Cohen poset} for adding $\aleph_2$-many Cohen reals is the collection of all finite partial functions $\aleph_2 \times \mathbb{N} \to \mathbf{2}$, ordered by reverse inclusion.
\end{defn}

In the formalization, the Cohen poset is represented as a \lstinline{structure} with three fields:
\begin{lstlisting}
structure 𝒞 : Type :=
  (ins : finset (ℵ₂.type × ℕ))
  (out : finset (ℵ₂.type × ℕ))
  (H : ins ∩ out = ∅)
\end{lstlisting}

That is, we identify a finite partial function $f$ with the triple \lil{⟨f.ins, f.out, f.H⟩}, where \lil{f.ins} is the preimage of $\{1\}$, \lil{f.out} is the preimage of $\{0\}$, and \lil{f.H} ensures well-definedness. While $f$ is usually defined as a finite partial function, we found that in practice $f$ is really only needed to give a finite partial specification of a subset of $\aleph_2 \times \mathbb{N}$ (i.e. a finite set \lil{f.ins} which \emph{must} be in the subset, and a finite set \lil{f.out} which \emph{must not} be in the subset), and chose this representation to make that information immediately accessible.

\begin{defn}
  Let $X$ be a topological space, and for any open set $U$, let $U^\perp$ denote the complement of the closure of $U$. The \textbf{regular open algebra} of a topological space $X$, written $\operatorname{RO}(X)$, is the collection of all open sets $U$ such that $U = (U^\perp)^\perp$, equipped with the structure of a complete Boolean algebra, with $x \sqcap y := x \cap y$, $x \sqcup y := ((x \cup y)^\perp)^\perp$, $\neg x := x^\perp$, and $\bigsqcup x_i := ((\bigcup x_i)^\perp)^\perp$.
\end{defn}

The Boolean algebra which we will use for forcing $\neg\mathsf{CH}$ is $\operatorname{RO}(2^{\aleph_2 \times \mathbb{N}})$. Unless stated otherwise, for the rest of this section, we put $\B := \operatorname{RO}(2^{\aleph_2 \times \mathbb{N}})$.

\begin{defn}
  We define the \textbf{canonical embedding} of the Cohen poset into $\B$ as follows:
  \begin{lstlisting}
def ι : 𝒞 → 𝔹 := λ p, {S | p.ins ⊆ S ∧ p.out ⊆ - S}
\end{lstlisting}
\end{defn}
That is, we send each \lil{c : 𝒞} to all the subsets which satisfy the specification given by \lil{c}. This is a clopen set, hence regular. Crucially, this embedding is \emph{dense}:
\begin{lstlisting}
lemma 𝒞_dense {b : 𝔹} (H : ⊥ < b) : ∃ p : 𝒞, ι p ≤ b
\end{lstlisting}
Recalling that $\leq$ in $\B$ is subset-inclusion, we see that this is essentially because the image of $\iota : \mathcal{C} \to \B$ \emph{is} the standard basis for the product topology. Our chosen encoding of the Cohen poset also made it easier to perform this identification when formalizing this proof.
\subsection{Adding $\aleph_2$-many distinct Cohen reals} \label{subsect:cohen-reals}
As we saw in \autoref{def-powerset}, for any $\B$-valued set $x$, characteristic functions into $\B$ from the underlying type of $x$ determine $\B$-valued subsets of $x$. While the ingredients $\aleph_2$ and $\mathbb{N}$ for $\B$ are types and thus external to \lil{bSet 𝔹}, they are represented nonetheless inside \lil{bSet 𝔹} by their check-names $\check{\aleph_2}$ and $\check{\mathbb{N}}$, and in fact \lil{ℵ₂} \emph{is} \lil{ℵ₂̌ .type} and \lil{ℕ} \emph{is} $\check{\mathbb{N}}$\lil{.type}. Given our specific choice of $\B$, this will allow us to construct an $\aleph_2$-indexed family of distinct subsets of $\check{\N}$, which we can then convert into an injective function from \lil{ℵ₂̌ } to \lil{ℕ}, \emph{inside} \lil{bSet 𝔹}.

\begin{defn}
  Let $\nu : \aleph_2$. For any $n : \N$, the collection of all subsets of $\aleph_2 \times \N$ which contain $(\nu, n)$ is a regular open of $2^{\aleph_2 \times \N}$, called the \textbf{principal open} $\mathbf{P}_{(\nu, n)}$ over $(\nu, n)$.
\end{defn}

\begin{defn}
  Let $\nu : \aleph_2$. We associate to $\nu$ the $\B$-valued characteristic function $\chi_{\nu} : \N \to \B$ defined by $\chi_{\nu}(n) := \mathbf{P}_{(\nu, n)}$. In light of our previous observations, we see that each $\chi_{\nu}$ induces a new $\B$-valued subset $\widetilde{\chi_{\nu}} \subseteq \check{\N}$. We call $\widetilde{\chi_{\nu}}$ a \textbf{Cohen real}.
\end{defn}
This gives us an $\aleph_2$-indexed family of Cohen reals. Converting this data into an injective function from $\check{\aleph_2}$ to $\mathbb{N}$ inside \lil{bSet 𝔹} requires some care. One must check that $\nu \mapsto \widetilde{\chi_{\nu}}$ is externally injective, and this is where the characterization of the Cohen poset as a dense subset of $\B$ (and moving back and forth between this representation and the definition as finite partial functions) comes in. Furthermore, one has to develop machinery similar to that for the powerset operation to convert an external injective function \lstinline{x.type → bSet 𝔹} to a $\B$-valued set which \lstinline{bSet 𝔹} thinks is a injective function, while maintaining conditions on the intended codomain. Our custom tactics and automation for reasoning inside $\B$ made this latter task significantly easier than it would have been otherwise. We refer the interested reader to our formalization for details.

\subsection{Preservation of cardinal inequalities} \label{subsect:cardinal-inequalities}
So far, we have shown for $\B = \operatorname{RO}(2^{\aleph_2 \times \mathbb{N}})$ that \lil{bSet 𝔹} thinks $\check{\aleph_2}$ is smaller than $\mathcal{P}(\check{\mathbb{N}})$. Although Lean believes there is a strict inequality of cardinals $\aleph_0 < \aleph_1 < \aleph_2$, in general we can only deduce that their representations inside \lil{bSet 𝔹} are subsets of each other: $\top \leq \check{\aleph_0} \subseteq^\B \check{\aleph_1} \subseteq^\B \check{\aleph_2}$. To finish negating $\mathsf{CH}$, it suffices to show that \lstinline{bSet 𝔹} thinks $\check{\aleph_0}$ is strictly smaller than $\check{\aleph_1}$, and that \lstinline{bSet 𝔹} thinks $\check{\aleph_1}$ is a strictly smaller than $\check{\aleph_2}$. That is, for cardinals $\kappa$, we want that the passage from $\kappa$ to $\check{\kappa}$ to preserve cardinal inequalities.

\begin{defn}
  For our purposes, ``$X$ is strictly smaller than $Y$'' means ``there exists no function \lil{f} such that for every \lil{y ∈ Y}, there exists an \lil{x ∈ X} such that \lil{(x,y) ∈ f}''. Thus, ``\lil{X} is strictly smaller than \lil{Y}'' translates to the Boolean truth-value
\begin{center}\lstinline{-(⨆f, (is_func f) ⊓ ⨅y, y ∈ᴮ Y ⟹ ⨆x, x ∈ᴮ X ⊓ (x, y) ∈ᴮ f)}.\end{center} We abbreviate this with ``$X \prec Y$''.
\end{defn}

The condition on an arbitrary $\B$ which ensures the preservation of cardinal inequalities is the \emph{countable chain condition}.

\begin{defn}
We say that $\B$ has the \textbf{countable chain condition} (CCC) if every antichain $\mathcal{A} : I \to \B$ (i.e. an indexed collection of elements $\mathcal{A} := \{a_i\}$ such that whenever $i \neq j, a_i \sqcap a_j = \bot$) has a countable image.
\end{defn}

We sketch the argument that CCC implies the preservation of cardinal inequalities. The proof is by contraposition. Let $\kappa_1$ and $\kappa_2$ be cardinals such that $\kappa_1 < \kappa_2$, and suppose that $\check{\kappa_1}$ is not strictly smaller than $\check{\kappa_2}$. Then there exists some \lil{f : bSet 𝔹} and some $\Gamma > \bot$ such that \lstinline{Γ ≤ (is_func f) ⊓ ⨅y, y ∈ᴮ κ₁̌  ⟹ ⨆x, x ∈ᴮ κ₂̌  ⊓ (x,y) ∈ᴮ f}. Then one can show:
\begin{lstlisting}
lemma AE_of_check_larger_than_check :
∀ β < κ₂, ∃ η < κ₁, ⊥ < (is_func f) ⊓ (η⠀̌, β ⠀̌ ) ∈ᴮ f
\end{lstlisting}
The name of this lemma emphasizes that what was happened here is that, given this $f$ and the assumption that it satisfes some $\forall$-$\exists$ formula inside \lil{bSet 𝔹}, we are able to extract, by virtue of $\check{\kappa_1}$ and $\check{\kappa_2}$ being check-names, a $\forall$-$\exists$ statement in the \emph{metatheory}. Using Lean's choice principle, we can then convert this $\forall$-$\exists$ statement into a function $g : \kappa_2 \to \kappa_1$, such that for every $\beta$, \lstinline{⊥ < (is_func f) ⊓ (g(β)̌ , β ̌ ) ∈ᴮ f}. Since $\kappa_2 > \kappa_1$, it follows from the infinite pigeonhole principle that there exists some $\eta < \kappa_1$ such that the $g^{-1}(\{\eta\})$ is uncountable. Define $\mathcal{A} : g^{-1}(\{\eta\}) \to \B$ by $\mathcal{A}(\beta) :=$ \lil{(is_func f) ⊓ (g(β)̌ , β ̌ ) ∈ᴮ f}. This is an uncountable antichain because if $\beta_1 \neq \beta_2$, then the well-definedness part of \lil{is_func f} ensures that, since $g(\beta_1) = g(\beta_2)$, the truth-value \lil{β₁̌ } $= f(g(\beta_1)) \neq^\B f(g(\beta_2)) =$ \lil{β₂̌ } is $\bot$.

Thus, conditional on showing that $\B = \operatorname{RO}(2^{\aleph_2 \times \mathbb{N}})$ has the CCC, we now have that cardinal inequalities are preserved in \lstinline{bSet 𝔹}. Combining this with the injection $\check{\aleph_2} \preceq \mathcal{P}(\mathbb{N})$, we obtain:
\begin{lstlisting}
theorem neg_CH : ⊤ = (ℕ ≺ (ℵ₁)̌  ⊓ (ℵ₁)̌  ≺ (ℵ₂)̌  ⊓ (ℵ₂)̌  ≼ 𝒫(ℕ))
\end{lstlisting}

The arguments sketched in \autoref{subsect:cohen-reals} and \autoref{subsect:cardinal-inequalities} form the heart of the forcing argument. Their proofs involve taking objects in \lil{Type u} and \lil{bSet 𝔹}, constructing corresponding objects on the other side, and reasoning about them in ordinary and $\B$-valued logic simultaneously to determine cardinalities in \lstinline{bSet 𝔹}. We have omitted many details from our discussion, but of course, all the proofs have been formally verified.

\subsection{The unprovability of $\mathsf{CH}$}
We conclude this section by briefly describing how the previous results may be converted into a formal proof of the unprovability of $\mathsf{CH}$. We work in a conservative expansion $\mathsf{ZFC}'$ of $\mathsf{ZFC}$ with an expanded language $L_{\mathsf{ZFC}'}$ with symbols for pairing, union, powerset, and $\omega$. We define $\mathsf{ZFC}'$ to be precisely the $\mathsf{ZFC}$ axioms which were verified in the fundamental theorem of forcing, along with specifications for the new function symbols. $\mathsf{CH}$ can then be written as a deeply-embedded $L_{\mathsf{ZFC}'}$ sentence (note the use of de Bruijn indices for variables)
\begin{lstlisting}
def CH : sentence L_ZFC' := ¬ ∃' ∃' (ω ≺ &1) ⊓ (&1 ≺ &0) ⊓ (&0 ≼ 𝒫(ω))
\end{lstlisting}
where \lil{≺} and \lil{≼} are abbreviations with the same meaning as in the previous section. Then proving \lstinline{bSet 𝔹 ⊨ ZFC' + ¬CH } is a straightforward matter of checking that sentences are interpreted correctly as Boolean truth values which we have already proved to be $\top$. Applying the contrapositive of the Boolean-valued soundness theorem yields the result.

\section{Transfinite combinatorics and the countable chain condition}
\label{sect:ccc}
What remains now is to prove that $\operatorname{RO}(2^{\aleph_2 \times \mathbb{N}})$ has the CCC. There are several ways forward; we chose a very general proof using the
$\Delta$-system lemma to show more generally that the product of topological spaces satisfies the CCC if every finite subproduct does. Our proof follows Kunen \cite{kunen2014set}.

\subsection{The $\Delta$-system lemma}

\begin{defn}
  A family $(A_i)_i$ of sets is called a \textbf{$\Delta$-system} (or a \textbf{sunflower} or \textbf{quasi-disjoint}) if there is a set $r$, called the \textbf{root} such that whenever $i \ne j$ we have $A_i \cap A_j = r$.
\end{defn}
\begin{lstlisting}
def is_delta_system {α ι : Type*} (A : ι → set α) :=
∃(root : set α), ∀{{x y}}, x ≠ y → A x ∩ A y = root
\end{lstlisting}

The $\Delta$-system lemma states that if we have an uncountable family of finite sets, there is an uncounbtable subfamily which forms a $\Delta$-system.
In Lean this is formulated as follows. (\lstinline{restrict A t} is the restriction of the collection \lstinline{A} to \lstinline{t}).
\begin{lstlisting}
theorem delta_system_lemma_uncountable {α ι : Type*}
  (A : ι → set α) (h : cardinal.omega < mk ι)
  (h2A : ∀i, finite (A i)) : ∃(t : set ι),
  cardinal.omega < mk t ∧ is_delta_system (restrict A t)
\end{lstlisting}
This theorem follows from the following more general statement, taking $\kappa=\aleph_0$ and $\theta=\aleph_1$ (for cardinal numbers the operation \lstinline{c ^< κ} or $c^{<\kappa}$ is the supremum of $c^\rho$ for $\rho<\kappa$).
\begin{lstlisting}
theorem delta_system_lemma {α ι : Type u} {κ θ : cardinal}
  (hκ : cardinal.omega ≤ κ) (hκθ : κ < θ) (hθ : is_regular θ)
  (hθ_le : ∀(c < θ), c ^< κ < θ) (A : ι → set α)
  (hA : θ ≤ mk ι) (h2A : ∀i, mk (A i) < κ) :
  ∃(t : set ι), mk t = θ ∧ is_delta_system (restrict A t)
\end{lstlisting}

We omit the proof, referring the interested reader to \cite{kunen2014set} or the formalization.

\subsection{$\operatorname{RO}(2^{\aleph_2 \times \mathbb{N}})$ has the countable chain condition}

\begin{defn}
  We say that a topological space $X$ satisfies the countable chain condition if every family of pairwise disjoint open sets is countable.
\end{defn}

We first give a sufficient condition for a product of topological spaces to satisfy the countable chain condition.
\begin{thm}
If we have a family $(X_i)_{i\in I}$ of topological spaces, then $\prod_{i\in I} X_i$ has the countable chain condition if for every finite $J\subseteq I$ the product $\prod_{i\in J} X_i$ has the countable chain condition.
\end{thm}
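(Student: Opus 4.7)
The plan is to argue by contraposition: suppose $\prod_{i \in I} X_i$ fails the CCC, so there exists an uncountable family $(U_\alpha)_{\alpha < \omega_1}$ of pairwise disjoint nonempty open sets; we will produce a finite $J \subseteq I$ such that $\prod_{i \in J} X_i$ also fails the CCC. First I would shrink each $U_\alpha$ to a nonempty basic open set of the product topology, so that each $U_\alpha$ has the form $\prod_{i \in F_\alpha} V_{\alpha, i} \times \prod_{i \notin F_\alpha} X_i$ for some finite support $F_\alpha \subseteq I$ and nonempty opens $V_{\alpha, i} \subseteq X_i$. The $U_\alpha$ remain nonempty, pairwise disjoint, and uncountable in number.

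Next I would apply \lstinline{delta_system_lemma_uncountable} to the family $(F_\alpha)_{\alpha < \omega_1}$ of finite subsets of $I$, obtaining an uncountable $T \subseteq \omega_1$ and a finite root $r \subseteq I$ such that $F_\alpha \cap F_\beta = r$ whenever $\alpha \ne \beta$ are in $T$; in particular $r \subseteq F_\alpha$ for every $\alpha \in T$, and the sets $F_\alpha \setminus r$ are pairwise disjoint across $T$. Now set $J := r$ and consider the restrictions $W_\alpha := \prod_{i \in r} V_{\alpha, i}$, which form an uncountable family of nonempty open sets in the finite product $\prod_{i \in r} X_i$. By hypothesis this finite product has the CCC, so there exist $\alpha \ne \beta$ in $T$ with $W_\alpha \cap W_\beta \ne \emptyset$; pick $x \in W_\alpha \cap W_\beta$.

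From $x$ I would build a point $w \in U_\alpha \cap U_\beta$, contradicting disjointness. Define $w_i := x_i$ for $i \in r$; for $i \in F_\alpha \setminus r$ pick any $w_i \in V_{\alpha, i}$ (possible since $V_{\alpha, i}$ is nonempty), for $i \in F_\beta \setminus r$ pick any $w_i \in V_{\beta, i}$ (well-defined because the two index sets are disjoint by the $\Delta$-system property), and for $i \notin F_\alpha \cup F_\beta$ pick anything in $X_i$. Then on every index in $F_\alpha$, $w_i \in V_{\alpha, i}$, so $w \in U_\alpha$, and symmetrically $w \in U_\beta$. This contradicts $U_\alpha \cap U_\beta = \emptyset$ and completes the proof.

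The main obstacle I anticipate is not the set-theoretic content, which is essentially a clean application of the $\Delta$-system lemma, but the bookkeeping required to formalize the ``shrink to a basic open'' step and the construction of the witness $w$. Specifically, pushing the refinement of each $U_\alpha$ through the definition of the product topology needs a choice of basic neighborhood and extraction of its finite support, and the final point $w$ requires piecing together values on $r$, $F_\alpha \setminus r$, $F_\beta \setminus r$, and the complement using choice; if the nonemptiness of each $V_{\alpha, i}$ is not immediate from the encoding, one may have to carry that datum explicitly through the argument. Once this bookkeeping is organized, the rest reduces to unfolding definitions of product opens and invoking \lstinline{delta_system_lemma_uncountable} and the finite-product CCC hypothesis.
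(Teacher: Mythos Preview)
Your proposal is correct and follows the same route as the paper: contrapose, shrink to basic opens with finite supports, apply the $\Delta$-system lemma to those supports, and obtain a contradiction in the finite subproduct over the root $r$. The only cosmetic difference is that the paper asserts the projections onto $\prod_{i\in r} X_i$ form an uncountable pairwise disjoint family (contradicting CCC there), whereas you argue the contrapositive by lifting a point from two intersecting projections back into $U_\alpha\cap U_\beta$---which is precisely how one proves the paper's disjointness claim.
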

\begin{proof}
For the proof, suppose we had an uncountable family of pairwise disjoint open subsets $U_k$ of $\prod_{i\in I} X_i$. By shrinking $U_k$, we may assume that each $U_k$ is a basic open set of the form $\prod_{i\in F_k} U_{k,i} \times \prod_{i \not\in F_k} X_i$ for some finite set $F_k$. Now the $(F_k)_k$ form a uncountable family of finite sets, so by the $\Delta$-system lemma we know that there is an uncountable family $K$ of indices such that $(F_k)_{k\in K}$ forms a $\Delta$-system with root $J$. Now we can take the projections $\pi(U_k)$ onto $\prod_{i\in J}X_i$ for $k\in K$. We can show this forms an uncountable disjoint family of opens in $\prod_{i\in J}X_i$, contradicting the assumption.
\end{proof}

With this, the rest of the proof that $\B=\operatorname{RO}(2^{\aleph_2 \times \mathbb{N}})$ has the CCC is easy: since every finite product $2^J$ is a finite topological space, and so satisfies the CCC, it follows that the space $2^{\aleph_2 \times \mathbb{N}}$ satisfies the CCC. Also, if a topological space $X$ satisfies the CCC then the algebra of regular opens satisfies the CCC, since every antichain of regular opens forms a family of disjoint open sets. Thus, we have shown:
\begin{lstlisting}
theorem 𝔹_CCC : CCC (regular_opens (set(ℵ₂.type × ℕ)))
\end{lstlisting}

\section{Related work}
\paragraph*{First-order logic, soundness, and completeness} There are many existing formalizations of first-order logic. Shankar \cite{shankar1997metamathematics} used a deep embedding of first-order logic to formalize incompleteness theorems. Harrison gives a deeply-embedded implementation of first-order logic in HOL Light \cite{harrison1998formalizing} and a proof-search style account of the completeness theorem in \cite{harrison2009handbook}. Margetson \cite{Ridge2005AMV} and Schlichtkrull \cite{schlichtkrull2018formalization} use the same argument for the completeness theorem in Isabelle/HOL, while Berghofer \cite{FOL-Fitting-AFP} (in Isabelle) and Ilik \cite {ilik2010constructive} (in Coq) use canonical term models.

\paragraph*{Set theory and forcing}
Set theory is a common target for formalization. Notably, a large body of formalized set theory has been completed in Isabelle/ZF, led by Paulson and his collaborators \cite{paulson1996mechanizing, paulson1993set, paulson2002reflection}. Most relevantly, this includes a formalization of the relative consistency of the axiom of choice with $\mathsf{ZF}$ \cite{paulson2003relative}. Building on this, Gunther, Pagano, and Terraf have begun formalizing the basic ingredients of forcing \cite{gunther2018first, gunther2019mechanization}, taking the more conventional approach of generic extensions of countable transitive models.

Our tactic library for Boolean-valued logic was inspired by work of Hudon \cite{Hudon2015TheUM} on Unit-B, using similar techniques to embed a proof language for temporal logic \cite{unitb}. It was pointed out to the authors that a trick similar to \autoref{poset-yoneda} had also been successfully applied in the Metamath library \cite{mario2}.

The work we have described in this paper relies heavily on Lean's \lstinline{mathlib}. In particular, the extensive \lstinline{set_theory} and \lstinline{ordinal} libraries contained nearly everything we needed (including a treatment of cofinalities for the $\Delta$-system lemma), with missing parts easily accessible through existing lemmas. These libraries were originally developed by Carneiro \cite{mario1}, in part to show that Lean proves the existence of infinitely many inaccessible cardinals.

\section{Conclusions and future work}
\paragraph*{Reflections on the proof}
As our formalization has shown, for the purposes of a consistency proof, one can perform forcing entirely outside of the set-theoretic foundations in which forcing is usually presented. There is no need to work inside an ambient model of set theory, or to even have a ground model of set theory over which one constructs a forcing extension. Instead, the recursive \emph{name} construction applied to a universe of types is key. The type universe, with its classical two-valued logic and its own notion of ordinals, takes the place of the standard universe of sets. These external ordinals are then represented in the internal ordinals of the forcing extension by indexing the construction of von Neumann ordinals. With a clever choice of forcing conditions $\B$, this representation of ordinals will preserve cardinal inequalities and force an uncountable set beneath $\mathcal{P}(\N)$.

In particular, \lstinline{pSet}, being only another special case of the construction which produces \lstinline{bSet 𝔹}, is no longer a prerequisite for working with \lstinline{bSet 𝔹}, but merely a convenient tool for organizing the check-names---this is the only role it played in the proof. The check-names themselves were actually not necessary either: as we remarked, the canonical map \lil{ordinal → bSet 𝔹} can be defined without reference to them. However, since in all of our sources, \lstinline{pSet} additionally played the role of the universe of types, and an interface for it was readily available in \lstinline{mathlib}, we started our formalization by following the usual arguments, implementing these simplifications as we became aware of them.
\paragraph*{Lessons learned}
\begin{itemize}
\item Originally, we thought set-theoretic arguments involving transfinite/ordinal induction, which are ubiquitous, would be difficult to implement. In practice, Lean's tools for well-founded recursion and the comprehensive treatment of ordinals in \lstinline{mathlib} made the implementation of such arguments painless.
\item Definitions and lemmas should be stated as generally as possible. This maximizes reusability, minimizes redundancy, and by exposing only the information required to complete the proof, improves the performance of automation.
\item One should invest early in domain-specific automation. The formalization of the fundamental theorem was completed using only the first two strategies outlined in \autoref{subsect:proof-language}; the calculations, while tedious, were recorded in our sources and it seemed easier to follow them. If we had followed through on the observations around \autoref{poset-yoneda} and developed the custom tactic library earlier, we would have saved a significant amount of time.
\end{itemize}
\paragraph*{Towards a formal proof of the independence of the continuum hypothesis}

The work we have described in this paper was undertaken as part of the Flypitch project, which aims to produce a formal proof of the independence of the continuum hypothesis. As such, the obvious next goal is a formalization of the consistency of $\mathsf{CH}$. Although it would be possible to do this using Boolean-valued models, we intend to develop the infrastructure necessary to support a proof by forcing with generic extensions, as well as G\"odel's original proof by way of analyzing the constructible universe $\mathsf{L}$.

Although our work includes a formal proof of the unprovability of a version of $\mathsf{CH}$ from a version of the $\mathsf{ZFC}$ axioms in a conservative extension of the language of $\mathsf{ZFC}$, verifying this is easy. What is more interesting is formalizing the equivalence of various common formulations of $\mathsf{ZFC}$ and $\mathsf{CH}$, so that a skeptical user may verify that their preferred version of $\mathsf{CH}$ is unprovable from their preferred version of $\mathsf{ZFC}$. This would require formalizations of the conservativity of commonly-used extensions of $\mathsf{ZFC}$, and of the equivalence of the various ways to say that one set is strictly smaller than another. 
The proof of the completeness theorem already required formalizing nontrivial conservativity statements, which shows that our framework is well-equipped to support such results.

Although the stated goal of our project is to achieve a formal proof of the independence of the continuum hypothesis, we also intend to develop reusable libraries for set theory and mathematical logic. We have completed a formalization of forcing, but are nowhere near completing a library which a set theorist could use to verify their research. Just as, more than 50 years ago, Cohen's proof marked the beginning of modern research in set theory, a formal proof of the independence of the continuum hypothesis will only mark the beginning of an integration of formal methods into modern research in set theory. This will require robust interfaces for handling the diverse range of forcing arguments and for reasoning about the consistency strengths of various extensions of $\mathsf{ZFC}$, so that---to paraphrase Kanamori \cite{kanamori1996mathematical, kanamori2008higher}---deeply-embedded notions of truth and relative consistency become matters of routine manipulation as in algebra. Our work demonstrates that such tasks are well within the scope of modern interactive theorem provers.

\section{References}

\bibliography{flypitch-itp-2019}

\end{document}